\newcommand{\bs}[1]{ \boldsymbol{#1} }
\begin{document}

\newtheorem{theorem}{Theorem}[section]
\newtheorem{lemma}[theorem]{Lemma}
\theoremstyle{definition}
\newtheorem{definition}[theorem]{Definition}
\newtheorem{folgerung}[theorem]{Corollary}
\newtheorem{proposition}[theorem]{Proposition}
\newtheorem{keywords}[theorem]{Keywords}
\newtheorem{conclusion}[theorem]{Conclusion}
\newtheorem{question}[theorem]{Question}
\newtheorem{rem}[theorem]{Remark}
\newtheorem{example}[theorem]{Example}
\newcommand{\Var}{\mbox{Var}}
\newcommand{\Cov}{\mbox{Cov}}
\newcommand{\SCR}{\mbox{SCR}}
\newenvironment{bew}{\begin{proof}[Proof]}{\end{proof}}

\title{Parameter uncertainty and reserve risk under Solvency II}
\author{Andreas Fr\"ohlich und Annegret Weng}
\date{\today}
\maketitle

\begin{abstract}
In this article we consider the parameter risk in the context of internal modelling of the reserve risk under Solvency II.\\
We discuss two opposed perspectives on parameter uncertainty and point out that standard methods of classical reserving focusing on the estimation error of claims reserves are in general not appropriate to model the impact of parameter uncertainty upon the actual risk of economic losses from the undertakings's perspective.\\
Referring to the requirements of Solvency II we assess methods to model parameter uncertainty for the reserve risk by comparing the probability of solvency actually attained when modelling the solvency risk capital requirement based on the respective method to the required confidence level. Using the simple example of a normal model we show that the bootstrapping approach is not appropriate to model parameter uncertainty according to this criterion. We then present an adaptation of the approach proposed in \cite {froehlich2014}. 
Experimental results demonstrate that this new method yields a risk capital model for the reserve risk achieving the required confidence level in good approximation.
\end{abstract}
{\bf Keywords:} Solvency II, parameter uncertainty, reserving risk, Solvency capital, internal model 
\section{Introduction}\label{sec:intro}
The Solvency II directive (\cite{solvency}) defines the capital requirement of an insurance undertaking as the value-at-risk of the loss of basic own funds for the confidence level $\alpha=99.5\%$ over a one-year time horizon (cf. \cite{solvency} Article 101). We interpret the change in basic own funds as a random variable.\\
\\
An effective risk management does not only require the consideration of the overall risk of an insurance undertaking, but also an assessment  of the material subrisks. If we interpret the loss of basic own funds over a one-year horizon due to a particular subrisk as a random variable $\bs X$, it is best practice to define the standalone risk capital requirement for this subrisk analogously to Article 101 in \cite{solvency} as the 99.5\% value-at-risk of $\bs X$.\\
\\
However, there is not only uncertainty about the future outcomes of $\bs X$ caused by random fluctuation, but also about the true distribution of $\bs X$. Therefore, the true 99.5\% value-at-risk of $\bs X$ is unknown and the insurance undertaking can only estimate its solvency capital requirement.\\
\\
In this article, we assume that the undertaking uses an internal model and estimates the parameters specifying $\bs X$ from historical data. The possible deviation of the parameter estimates from the true parameters causes parameter uncertainty. In the sequel we ignore the basic model uncertainty and concentrate on the parameter uncertainty.\\ 
In this situation there are two sources of uncertainty:
\begin{enumerate}
	\item the random variable $\bs X$,
	\item the uncertainty with respect to the modelled solvency capital requirement \textbf{SCR} resulting from the randomness of the historical data used to estimate the parameters.
\end{enumerate}
Note that due to 2. the modelled solvency capital requirement \textbf{SCR} itself is a random variable.\\
\\
From \cite{solvency}, Article 101 we derive the following question:
\begin{question}\label{qu:1}
How can we model the solvency capital requirement \textbf{SCR} (for a subrisk) such that it will not be exceeded by the possible loss $\bs X$ of basic own funds (due to this subrisk) over a one-year horizon with a probability of 99.5\% - taking into account the randomness of both $\bs X$ and \textbf{SCR}?
\end{question}
This question corresponds to a central idea of predictive inference, see e.g. \cite{barndorff, young,serverini}, and has been investigated in the context of Solvency II for several distributional assumptions for the random variable $\bs X$ in various articles (see e.g. \cite{gerrard,froehlich2014,bignozzi,bignozzi2}). Furthermore, Question 1.1 corresponds to an unbiased estimate of the value-at-risk in the sense of \cite{pitera}.\\
\\
In the sequel we restrict to the modelling of a standalone solvency capital requirement for the reserve risk in the sense of Question \ref{qu:1}. In the context of the reserve risk, $\bs X$ is the loss according to the one-year development result of incurred claims and the solvency capital requirement \textbf{SCR} is the standalone solvency capital requirement for the one-year reserve risk. We assume that we can write $\bs X=\bs X(\theta)$ for some fixed, but unknown parameter vector $\theta$. The undertaking does not know $\theta$ but can only derive an estimate $\hat{\theta}$ based on the observed claims development triangle $D$. These notions will be made precise in Section \ref{sec:def}. For the sake of simplicity we ignore the impact of interest rates and (the development result of) the risk margin.\\
\\
There is an extensive literature on the reserve risk dealing with the prediction error of claims reserves. However, the majority of these contributions does not consider the value-at-risk of the one-year claims development result, but investigates the ultimate mean squared prediction error, see e.g. \cite{Mack1993,Mack1999,wuethrich2008}. 
\cite{merz2008} investigate the one-year development result, but they also use the mean squared estimation error as the risk measure.\\
For Mack's chain ladder model the uncertainty about the volatility parameters $\sigma_k^2$ is often ignored by just replacing $\sigma_k^2$ by its estimate $\hat{\sigma}_k^2$ since this uncertainty has only minor impact on the mean squared estimation error of the claims reserve. However, the uncertainty about volatility parameters has a crucial impact on the value-at-risk of the corresponding predictive distribution according to the solvency capital requirement.\\
To model the predictive distribution the existing literature recommends either the Bayesian approach or bootstrapping, see e.g. \cite{bjoerkwall2011,england1999,england2006,gisler2006,pinheiro2003}. For the one-year risk we refer to \cite{krause2009}. However, none of these articles addresses Question \ref{qu:1}.\\ 
\\
This article is a first contribution how to model the solvency capital requirement for the reserve risk with respect to the required confidence level of 99.5\% in the sense of Question \ref{qu:1}. To understand the economic relevance of parameter uncertainty with respect to the unknown future losses, it is important to distinguish between the following two perspectives:
\begin{enumerate}
	\item The theoretical perspective: From the theoretical perspective both the historical data and the parameter estimate  $\hat{\bs \theta}$ are random. The true parameter vector is not known, but fixed.
	\item The undertaking's perspective: From this perspective, there is only one fixed sample of historical data. Thus, using a fixed estimation method, the estimate $\hat{\theta}$ is fixed. There is uncertainty about the true parameter $\theta$.
\end{enumerate}
Note that the uncertainty about the true parameter $\theta$ from the undertaking's perspective refers to the actual economic risk of potential true losses, which depends on $\theta$ (but is not directly affected by the estimate $\hat{\theta}$). This indicates that the undertaking's perspective yields the basis for an economic interpretation of parameter risk. In Section \ref{sec:perspective} we explain in more detail why the actual economic risk relevant for Solvency II is given by the undertaking's perspective reflecting the real situation of the undertaking. For illustration, we use a simple example demonstrating the difference between the theoretical perspective and the undertaking's perspective. For this example we show that the two perspectives lead to different parameter distributions and prove that indeed the parameter distribution corresponding to the undertaking's perspective yields an exact solution to Question \ref{qu:1}.  Moreover, we explain why standard methods of classical reserving based on the theoretical perspective are in general not appropriate to model the impact of parameter uncertainty upon the actual risk of economic losses.\\
\\
A possibility to model parameter uncertainty from the undertaking's perspective would be the application of the Bayesian approach. Note that these techniques use additional a-priori-information resp. expert judgement (cf. e.g. \cite{wuethrich2008} Chapter 4, \cite{verrall1990} or \cite{peters}).\\
However, in order to find a solution to Question \ref{qu:1} for the reserve risk we follow the approach introduced in \cite{froehlich2014} (in the sequel this method is called ``inversion method'') based on Fisher's idea of fiducial inference \cite{fisher1930}. While there was a lot of criticism of Fisher's original argument (see  \cite{zabell92} or \cite{hannig2016}, p.2), in the last decades many authors reinvestigated Fisher's idea and showed that fiducial inference, properly generalized, yields solutions to many important inference problems (see e.g. \cite{hannig2016}, \cite{iyer2004}, \cite{hannig2013}, \cite{hannig2006}, \cite{wang2012}).\\
\\
In \cite{froehlich2014} the authors proved that their approach to model parameter uncertainty based on fiducial inference yields an exact solution to the fundamental Question \ref{qu:1} in the context of Solvency II for a wide class of distributions. Thus, it is a straight forward idea to apply the inversion method introduced in \cite{froehlich2014} in order to solve Question 1.1 for the reserve risk in the context of Solvency II. Insofar, we concentrate on modelling a predictive distribution from the undertaking's perspective without using any a-priori-information or expert knowledge. In particular, applying the inversion method we avoid a sophisticated Markov Chain Monte Carlo simulation, usually necessary to perform a Bayesian analysis.\\ 
\\
Referring to Question \ref{qu:1} in Section \ref{sec:exmethods} we assess several methods to model the risk capital requirement for the reserve risk by investigating the probability of solvency (cf. Section \ref{subsec:angemessen}). To illustrate the effects of parameter uncertainty we consider a very simple model - the normal model (cf. \cite{england2006, gisler2006}). We discuss the bootstrapping approach and present experimental results demonstrating that even for this simple model bootstrapping is not appropriate in the sense of Question \ref{qu:1}, since it does not guarantee the required solvency level of 99.5\% under the consideration of the randomness of the historical data.\\
In Section \ref{subsec:inv} we adjust the inversion method proposed in \cite{froehlich2014} to derive a risk capital model for the reserve risk achieving the required probability of solvency in good approximation.

\section{Parameter uncertainty and reserve risk}\label{sec:def}

\subsection{Basic definitions}\label{subsec:grand}
Random variables are printed in \textbf{bold}.\\
Throughout the article, $t=0$ denotes the time corresponding to the current solvency balance sheet and $t=1$ denotes the end of the one-year period. In the sequel, we call a quantity ``unknown'' if it is unknown to the undertaking.\\

Let $\bs C_{i,k}$ denote the cumulative claims payments of accident year $i$, $0\le i\le n$, up to development year $k$, $0\le k\le n$. We interpret $\bs C_{i,k}$ as a random variable for which we observe realizations $C_{i,k}$ for $i+k\le n$. In the sequel $D$ denotes the observed claims development triangle $\{C_{i,k}: i+k\le n\}$ which is considered as a realization of a random vector $\bs D$. For the sake of simplicity, we neglect the effect from interest rates upon the best estimate reserve.\\
Using an appropriate reserving method to estimate the ultimate claims payment $\hat{C}_{i,n}$ of accident year $i=1,\ldots,n$, the best estimate reserve is given by
$
\hat{R}_0^i=\hat{C}_{i,n}-C_{i,n-i}
$
for all accident years $i=1,\ldots,n$. For simplicity we assume that all claims are settled after $n$ development years.\\
The total best estimate reserve is given by
$
\hat{R}_0=\sum_{i=1}^n \hat{R}^i_0.
$
We use the notation $\hat{R}_0(D)$ to emphasize the dependency of $\hat{R}_0$ on the realization $D$ of $\bs D$.\\
Consider the random payments of the next calendar year
\begin{equation}\label{eq:z}
\bs Z_{i,n-i+1}=\bs C_{i,n-i+1} - C_{i,n-i}
\end{equation}
for $1\le i\le n$ and set $\bs Z=\sum_{i=1}^n \bs Z_{i,n-i+1}$.\\
We denote the best estimate reserve at $t=1$ for the same accident years $0\le i\le n$ by $\hat{\bs R}_1$. As in \cite{krause2009,merz2008} we assume that $\hat{\bs R}_1$ is determined by the claims observed up to time $t=1$, i.e. the claims development triangle $D$ in $t=0$ extended by the diagonal representing the payments of the next calendar year $\overrightarrow{\bs Z}=(\bs Z_{i,n-i+1}:1\le i\le n)$, using an appropriate reserving method. We write $\hat{\bs R}_1=\hat{R}_1(D,\overrightarrow{\bs Z})$ to stress this deterministic dependency.\\
The one-year claims development loss $\bs X=\bs Z+\hat{\bs R}_1-\hat{R}_0(D)$ describes the possible loss caused by the difference between the best estimate reserve in $t=0$ and the sum of the expenses for the claims payments within the next year and the expenditure for setting up the reserve at the end of the next year. For simplicity we use $\bs S$ to denote $\bs Z+\hat{\bs R}_1$.\\
\\
Note that for many common reserving methods the random quantities described above can be written in the form $\bs Z_{i,n-i+1}=Z_{i,n-i+1}(\bs \zeta_{i,n-i+1},D,\theta_{n-i+1})$, $\bs Z=Z(\overrightarrow{\bs \zeta},D,\theta)$, $\bs S=S(\overrightarrow{\bs \zeta},D,\theta)$, $\bs X=X(\overrightarrow{\bs \zeta},D,\theta)$ with appropriate mappings $Z_{i,n-i+1}$, $Z$, $S$ and $X$ where $\theta$ is the parameter vector according to the chosen reserving method and $\overrightarrow{\bs \zeta}$ is a random vector of future standardized residues whose distribution is independent of $\theta$.\\
\\
As an example we consider the stochastic chain ladder model introduced by Mack (see \cite{Mack1993,Mack1999}): Let
$$
F_{i,k}=\frac{C_{i,k}}{C_{i,k-1}}.
$$
and assume that there exist factors $f_1,\ldots,f_n>0$ and variance parameters $\sigma_1^2,\ldots,\sigma_n^2$ such that for all $0\le i\le n$ and $1\le k\le n$
we have
\begin{itemize}
\item $E[\bs F_{i,k}| C_{i,0},\ldots, C_{i,k-1}]=f_k$,
\item $\Var[\bs F_{i,k}| C_{i,0},\ldots, C_{i,k-1}]=\frac{\sigma_k^2}{C_{i,k-1}^{\gamma}}$ for $\gamma=0$ or $\gamma=1$  and
\item independence of the accident years: the vectors $(\bs C_{i,0},\ldots,\bs C_{i,n})$, $0\le i\le n$, are independent.
\end{itemize} 
Recall that we assume that all claims are settled after $n$ years, i.e. $f_n=1$ and $\sigma_n^2=0$. We assume that the parameter vector $\theta=(f_1,\sigma_1^2,f_2,\sigma_2^2,\ldots,f_{n-1},\sigma_{n-1}^2)$ is unknown.\\
Unbiased estimates are given by 
\begin{align}\label{eq:estim}
\hat{f}_k&=\sum_{i=0}^{n-k} C_{i,k-1}^\gamma F_{i,k}/\sum_{i=0}^{n-k} C_{i,k-1}^\gamma,\nonumber\\
\hat{\sigma}_k^2&=\frac{1}{n-k}\sum_{i=0}^{n-k} C_{i,k-1}^\gamma \left(F_{i,k}-\hat{f}_k\right)^2
\end{align}
for $k=1,\ldots,n-1$. We set $\hat{f}_n=1$ and $\hat{\sigma}_n^2=0$. The estimated parameter vector is given by $\hat{\theta}=(\hat{f}_1,\hat{\sigma}_1^2,\ldots,\hat{f}_{n-1},\hat{\sigma}_{n-1}^2)$. In particular, the best estimate reserve $\hat{R}_0$ is determined by the chain ladder procedure using the chain ladder factors $(\hat{f}_1,\ldots,\hat{f}_{n-1},\hat{f}_{n})$.\\

Furthermore, we assume that for $0\le i\le n$, $1\le k\le n$ the individual chain ladder factors can be written as
\begin{equation}\label{eq:res}
\bs F_{ik}=f_k+\frac{\sigma_k}{C_{i,k-1}^\gamma}\cdot \bs \zeta_{ik}
\end{equation}
where $\bs \zeta_{ik}$ are iid. residues with mean 0 and variance 1 and $\gamma\in\{0,1\}$. Let $\overrightarrow{\bs \zeta}=(\bs \zeta_{i,n-i+1}: 1\le i\le n)$ be the random residue vector of the next business year and set $\theta_k:=(f_k,\sigma_k^2)$. In particular, based on the random payments $\bs Z_{i,n-i+1}=(\bs F_{i,n-i+1}-1)\cdot C_{i,n-i}$, $i=1,\ldots,n$, the best estimate reserve $\hat{\bs R}_1=\hat{R}_1(D,\overrightarrow{\bs Z})$ is determined using the chain ladder procedure based on the development triangle $D$ extended by the ``new diagonal'' $\overrightarrow{\bs Z}$.

\subsection{Modelled risk and probability of solvency}\label{subsec:angemessen}
In this subsection we introduce the notion of modelled risk, modelled risk capital and probability of solvency taking parameter uncertainty into account.\\
\\
Let $\bs X=\bs X(\theta)$ be a random variable describing a subrisk of the undertaking, whose distribution depends on an unknown parameter vector $\theta$. For the reserve risk consider the one-year claims development loss 
$$
\bs X=\bs S(\theta)-\hat{R}_0(D)
$$
where $\bs S(\theta)= \bs Z(\theta)+\hat{R}_1(D,\overrightarrow{\bs Z}(\theta))$ (cf. Section \ref{subsec:grand}).\\
If the parameter vector $\theta$ was known, the required risk capital for the one-year reserve risk for the confidence level $\alpha$ would just be the $\alpha$-quantile of the random variable  
$\bs X=X(\overrightarrow{\bs \zeta},D,\theta)$ (cf. the notation introduced in Subsection \ref{subsec:grand}).\\
But since the undertaking does not know the parameter vector $\theta$, it does not know the true distribution of $\bs X$. Hence, we assume that it can only calculate the risk capital requirement based on the observed historical data $D$, which is a realization of the random vector $\bs D$. We assume that $\bs D$ and $\overrightarrow{\bs \zeta}$ are independent.\\

Given the observed data $C_{i,k}$ for $i+k\le n$ we assume that the undertaking models its risk as a predictive distribution by the following two-step procedure:
\begin{enumerate}
\item Given a method $M$ and the triangle $D=\{C_{i,k}:i+k\le n\}$ generate a probability distribution $\mathcal{P}=\mathcal{P}(D;M)$ in order to simulate a random parameter vector $\bs \theta^{sim}$.
\item Consider the modelled claims development loss $\bs X^{model}:=X^{model}(\overrightarrow{\bs \zeta^\prime},D,\bs \theta^{sim})$ for an independent copy $\overrightarrow{\bs \zeta^\prime}$ of $\overrightarrow{\bs \zeta}$ (cf. the notation introduced in Subsection \ref{subsec:grand}).
\end{enumerate}
We assume that $\overrightarrow{\bs \zeta}$, $\overrightarrow{\bs \zeta^\prime}$ and $\bs \theta_{sim}$ are independent. The random variable $\bs X^{model}$ depends on the data $D$, but also on the method $M$ resp. the chosen parameter distribution $\mathcal{P}$. In practice, the procedure above is typically performed using a Monte-Carlo simulation.\\
For simplicity, we assume that the cumulative distribution function $F_{{\bs X}^{model}}$ of $\bs X^{model}$ is invertible.
\begin{definition}
	Let a method $M$ and a claims development triangle $D$ be given. Referring to the two step procedure described above we call $\bs \theta_{sim}$ the \textbf{modelled parameter (vector)} and the random variable $\bs X^{model}$ the \textbf{modelled risk}. For $0<\alpha<1$ we refer to
$$
\SCR(\alpha;D;M):=F_{\bs X^{model}}^{-1}(\alpha)
$$
as the \textbf{modelled risk capital with respect to the confidence level $\alpha$}. Taking the randomness of the historical data $\bs D$ into account we call
\begin{eqnarray}\label{eq:konkret}
P\left(\bs X\le \SCR(\alpha;\bs D;M)\right)=P\left(X(\overrightarrow{\bs \zeta},\bs D,\theta)\le \SCR(\alpha;\bs D;M)\right)
\end{eqnarray}
the \textbf{probability of solvency} according to the corresponding risk capital model subject to the method $M$.
\end{definition} 
Note that in (\ref{eq:konkret}) not only $X(\overrightarrow{\bs \zeta},\bs D,\theta)$, but also $\SCR(\alpha;\bs D;M)$ is considered to be random.
\begin{rem}\label{rem:parameter}
	We carefully distinguish between the modelled and the ``true'' quantities:
	\begin{itemize}
		\item The distribution of the modelled parameter vector $\bs \theta_{sim}$ depends on the observed data at time $t=0$ and on the choice of the method $M$. However, the ``true'' parameter vector $\theta$ is still assumed to be unknown but fixed.
		\item We stress the difference between the modelled risk $\bs X^{model}$ and the ``true'' risk $\bs X$: Note that $\bs X^{model}$ and $\bs X$ are independent random variables, that are, in general, not even from the same distribution family. Consider the time $t=0$ and let the data $D$ be given. At $t=0$ the distribution $\bs X^{model}$ has already been specified inside the risk capital model. In contrast, since $\theta$ is unknown, the probability distribution of $\bs X$ is unknown from the undertaking's perspective.
	\end{itemize}
\end{rem}
Using the notation introduced above we reformulate the central Question \ref{qu:1}: \begin{question}\label{qu:2}
	Given a confidence level $0<\alpha<1$. How can we determine a method $M$ to model a parameter distribution of $\bs \theta_{sim}$ such that
	\begin{equation}
	\label{eq:q2}
	P\left(\bs X\le SCR(\alpha;\bs D;M)\right)=\alpha ?
	\end{equation}	
\end{question}
We refer to this central question in order to assess methods $M$ to model parameter uncertainty in the context of Solvency II. However, for complex practical problems (as the reserve risk) it may be hard to find an exact solution for this question. In this case we aim for a method $M$ solving (\ref{eq:q2}) in good approximation.
\begin{rem}\label{rem:backtesting} Note that there is a close relation of the probability of solvency to backtesting
	(cf. \cite{gerrard}, p. 731 or \cite{froehlich2014}, Remark 2). 
\end{rem}
Applying the two-step procedure described above to the chain-ladder model we determine a distribution for the modelled parameter vector $$\bs \theta^{sim}=(\bs f_1^{sim},(\bs \sigma_1^{sim})^2,\bs f_2^{sim},(\bs \sigma_2^{sim})^2,\ldots,\bs f_{n-1}^{sim},(\bs \sigma_{n-1}^{sim})^2)$$ and model the incremental payment $\bs Z_{i,n-i+1}$ by $\bs Z^{model}_{i,n-i+1}=\bs C_{i,n-i+1}^{model}-C_{i,n-i}$ where
$
\bs C_{i,n-i+1}^{model}=C_{i,n-i}\cdot \bs F_{i,n-i+1}^{sim}
$
and
$$
\bs F_{i,n-i+1}^{sim}=\bs{f}_{n-i+1}^{sim}+\frac{\bs \sigma_{n-i+1}^{sim}}{\sqrt{C_{i,n-i}^\gamma}}\cdot \bs \zeta_{i,n-i+1}^\prime
$$
for a random parameter vector $\bs \theta_{n-i+1}^{sim}=(\bs f_{n-i+1}^{sim},(\bs \sigma_{n-i+1}^{sim})^2)$, $1\le i\le n$, determined by some method $M$ and independent of $\overrightarrow{\bs \zeta}^\prime=\left\{\bs \zeta_{i,n-i+1}^\prime:1\le i\le n\right\}$ with iid. modelled residues $\bs \zeta_{i,n-i+1}^\prime$.\\
We model the reserve $\hat{R}_1(D,\overrightarrow{\bs Z}^{model})=\hat{R}_1(D,(\bs Z^{model}_{i,n-i+1}:1\le i\le n))$ using the chain ladder method with either $\gamma=0$ or $\gamma=1$.\\
Recalling the notation introduced above the ``true'' claims development loss is given by 
$$
\bs X=\bs Z(\theta)+\hat{R}_1(D;\overrightarrow{\bs Z}(\theta))-\hat{R}_0(D)
$$
and the modelled claims development loss is equal to
$$
\bs X^{model}=\bs Z^{model}+\hat{R}_1(D,\overrightarrow{\bs Z}^{model})-\hat{R}_0(D)
$$
where $\bs Z^{model}=\sum_{i=1}^n \bs Z_{i,n-i+1}^{model}$. Let $\bs S^{model}=\bs Z^{model}+\hat{R}_1(D;\overrightarrow{\bs Z}^{model})$.
\begin{rem}
	Since $\SCR(\alpha;D;M)=F_{\bs S^{model}}^{-1}(\alpha)-\hat{R}_0(D)$ and $\bs X=\bs S-\hat{R}_0(D)$, the solvency requirement $\bs X\le SCR(\alpha;\bs D;M)$ is equivalent to $\bs S\le F_{\bs S^{model}}^{-1}(\alpha)$, i.e. \textsl{the best estimate reserve $\hat{R}_0(D)$ cancels out on both sides} of the inequality. This shows that the problem of risk capital calculation for the reserve risk in the context of Solvency II according to Question \ref{qu:2} differs considerably from the objective of classical reserving methods focussing on the estimation error according to the best estimate reserve $\hat{R}_0(D)$.
\end{rem}
\section[The undertaking's perspective]{Parameter uncertainty from the undertaking's perspective}\label{sec:perspective}
Recall the meaning of the terms ``theoretical perspective`'' and ``undertaking's perspective'' from the introduction.\\
The objective of this section is to provide an intuitive understanding of the impact of parameter uncertainty for the reserve risk from an economic point of view. For this purpose it is crucial to recognize that the actual situation of the undertaking can be characterized by the following simple, but fundamental observations:
\begin{enumerate}
	\item[i)] The observed development triangle $D$ is fixed. Hence, for a given estimation method both the parameter estimate $\hat{\theta}$ and the best estimate reserve $\hat{R}_0(D)$ are also fixed. In particular, there is uncertainty about the true parameter vector $\theta$ - not about $\hat{\theta}$.
	\item[ii)] The real economic risk results from the true distribution of future claims depending on the unknown true parameter vector $\theta$. The parameter risk arises from the uncertainty about $\theta$. However, the parameter estimation does not (directly) influence the true distribution of future claims.
\end{enumerate}
These observations make obvious that the actual economic risk of the undertaking relevant for Solvency II is given by the undertaking's perspective.
In particular, in order to model the real economic reserve risk it does not make sense to use a predictive distribution of future claims payments resp. of $\bs X^{model}$ directly based on the distribution of the estimate $\hat{\bs \theta}$. We conclude that the theoretical perspective is not appropriate to model the impact of parameter uncertainty from the economic point of view of the undertaking. More precisely, using the notation introduced in Subsection \ref{subsec:angemessen}:\\
\par
\begingroup
\leftskip4em
\rightskip\leftskip
\textsl{A predictive distribution modelling future losses $\bs X^{model}$ by using the distribution of the estimate $\hat{\bs \theta}$ as the parameter distribution $\mathcal{P}$ of the modelled parameter vector $\bs \theta_{sim}$ is, in general, not appropriate to represent the impact of parameter uncertainty upon the real risk of the undertaking, which arises from the uncertainty about the true parameter vector $\theta$ corresponding to the actual economic losses $\bs X=\bs X(\theta)$.}
\\
\par
\endgroup
Summarizing the discussion above, from the economic point of view of the undertaking parameter risk is defined as follows: 
\begin{definition}\label{def:paramrisk}
	The \textbf{parameter risk} from the undertaking's perspective refers to the uncertainty about the true parameter vector $\theta$ corresponding to the random variable $\bs S$ conditioned on the fixed observed triangle $D$.
\end{definition}
This leads to the question: How can we model parameter risk from the undertaking's perspective? Thus, using the notation introduced in Section \ref{subsec:angemessen}, our objective is to deduce a parameter distribution to model the parameter vector $\bs \theta_{sim}$ reflecting the uncertainty of the undertaking about the true parameter vector\footnote{Recall the difference between modelled and true quantities from Remark \ref{rem:parameter}. In particular, it is important  to understand that we do not assume the true unknown parameter to be random.} $\theta$ based on the fixed observation $\hat{\theta}$. 

There may be several possible methods to model parameter risk from the undertaking's perspective including the Bayesian approach. However, a straight forward idea making no use of any assumptions about an a-priori parameter distribution is based on Fisher's fiducial argument\footnote{For an introduction to fiducial inference and for a discussion of the history as well as the strengthens and weaknesses of the original fiducial approach we refer to \cite{froehlich2014}, \cite{zabell92}. The fiducial approach has properly been generalized in the last decades by many researches (see e.g. \cite{hannig2016} for a comprehensive survey).} (cf. \cite{fisher1930}). Indeed, in \cite{froehlich2014} the authors introduced a method to model parameter uncertainty based on the fiducial approach solving our central Question \ref{qu:2} for a wide class of distributions.\\

Before considering the rather complex reserve risk we illustrate the difference between the theoretical perspective and the modelling of the undertaking's perspective based on fiducial inference using the simple example of the normal distribution $N(0;\sigma^2)$ with fixed, but unknown parameter $\sigma^2$.
In this simple example we give a short proof that the fiducial parameter distribution modelling the uncertainty about $\sigma^2$ from the undertaking's perspective actually yields an exact solution to our central Question \ref{qu:2}. This illustrates the close connection between our central Question \ref{qu:2} and the fundamental idea of viewing parameter uncertainty from the undertaking's perspective. 
\begin{example}\label{ex:intro}
	Let $\bs D=(\bs X_1,\ldots,\bs X_n)$ be the historical data where $\bs X_1,\ldots,\bs X_n$ are independent copies of $\bs X=\sigma\cdot \bs Z$, $\bs Z\sim N(0;1)$ for $i=1,\ldots,n$. An unbiased estimate of the parameter $\sigma^2$ is given by
	\begin{equation}\label{eq:fisher}
	\hat{\bs \sigma}^2 :=\frac{1}{n}\cdot \sum_i \bs X_i^2
	= \frac{\sigma^2}{n}\cdot \sum_i \bs Z_i^2=\frac{\sigma^2}{n}\cdot \bs M
	\end{equation}
	where $\bs Z_i$ are i.i.d. with $\bs Z_i\sim N(0;1)$ and $\bs M:=\sum \bs Z_i^2$ is $\chi^2(n)$ distributed with $n$ degrees of freedom. In this example we consider the modelling of the parameter uncertainty from the theoretical perspective resp. from the undertaking's perspective and denote the modelled parameter by $\bs \sigma_{sim}^2$.
	\begin{enumerate}
		\item 
		From the \textsl{theoretical perspective} $\hat{\bs \sigma}^2$ has the distribution
		\begin{align*}
		\hat{\bs \sigma}^2= \frac{\sigma^2}{n}\cdot \bs M\sim \frac{\sigma^2 }{n} \cdot \chi^2(n) \qquad (A),
		\end{align*}
		where $\chi^2(n)$ is the $\chi^2$-distribution with $n$ degrees of freedom. 
		However, using the distribution $(A)$ to model $\bs \sigma_{sim}^2$ would not reflect the parameter risk from the undertaking's perspective (see the general arguments above). 
		\item From the \textsl{undertaking's perspective} the estimate $\hat{\sigma}^2$ is given, but there is uncertainty about the true parameter $\sigma^2$. Note that from this perspective the uncertainty about $\sigma^2$ is due to the fact that the undertaking does not know the realization of the random factor $\bs M$, since the undertaking would be able to conclude the true value of $\sigma^2$ from $(A)$ if it knew the realization of $\bs M$. In this case there would be no parameter uncertainty. However, since the undertaking has no information about the realization of $\bs M$, the idea of the fiducial approach is to model this uncertainty by using an independent copy\footnote{Note that the assumption that $\bs M$ and $\bs M^\prime$ are independent corresponds to the fundamental idea of any Monte-Carlo based risk model to simulate independent copies of the true risk factors, while the realizations of the true risk factors are unknown. Thus, in general, modelled risk factors are independent of the (unknown) realizations of the true risk factors.}	$\bs M^\prime$ of $\bs M$ (cf. e.g. \cite{hannig2016}, p. 6). Thus, following the fiducial approach we solve Equation $(A)$ for $\sigma^2$ to obtain the modelled parameter  
		$$
		\bs \sigma_{sim}^2=n\cdot \hat{\sigma}^2/\bs M^\prime\sim n\cdot \hat{\sigma}^2/\chi^2(n)  \qquad (B).
		$$
		Note that unlike the Bayesian approach we did neither need any a-priori distributional assumptions to deduce the fiducial distribution nor we assume $\sigma^2$ to be random (since we carefully distinguish between the modelled parameter $\bs\sigma_{sim}^2$ and the unknown true parameter $\sigma^2$).\\
		Indeed, the parameter distribution (B) yields an exact solution to Question \ref{qu:2} for this simple example: Let $\bs X^{model}=\bs \sigma_{sim}\cdot \bs Z^\prime=\hat{\sigma}\cdot\sqrt{\frac{n}{\bs M^\prime}}\cdot\bs Z^\prime$ for some standard normally distributed random variable $\bs Z^\prime$ independent of $\bs Z$, $\bs M$ and $\bs M^\prime$. Note that $\bs X^{model}$ can be written as $\hat{\sigma}\cdot \bs T^\prime$ where $\bs T^\prime=\sqrt{n/\bs M^\prime}\cdot \bs Z^\prime$ is $t$-distributed with $n$ degrees of freedom. Moreover, we define the $t$-distributed random variable $\bs T=\sqrt{n/\bs M}\cdot \bs Z$. For $SCR(\alpha;D;\mbox{fiducial}):=F^{-1}_{\bs X^{model}}(\alpha)=\hat{\sigma}\cdot F^{-1}_{\bs T^\prime}(\alpha)$ we derive
		\begin{align*}
		P(\bs X\le SCR(\alpha;\bs D;\mbox{fiducial}))&=P\left(\sigma \cdot\bs Z\le \hat{\bs \sigma}\cdot F^{-1}_{\bs T^\prime}(\alpha)\right)\\&=P\left(\sigma \cdot\bs Z\le \sigma\cdot  \sqrt{\frac{\bs M}{n}}\cdot F^{-1}_{\bs T^\prime}(\alpha)\right)\\
		&=P\left(\bs T
		\le F^{-1}_{\bs T^\prime}(\alpha)\right)=\alpha.
		\end{align*}
		This proves that the modelled risk capital $SCR(\alpha; D;\mbox{fiducial})=F^{-1}_{\bs X^{model}}(\alpha)$ attains the required probability of solvency, i.e. we solved Question \ref{qu:2} for this example.
	\end{enumerate} 
	The distributions (A) and (B) do not coincide. Note that the density function of the $\chi^2(n)$-distribution corresponding to $(A)$ is equal to
	$$
	f(x)=const\cdot x^{\frac{n}{2}-1}\exp\left(-\frac{x}{2}\right),
	$$
	i.e. it decreases exponentially, whereas the distribution $1/\chi^2(n)$ corresponding to (B) has the density function 
	$$
	\frac{1}{x^2}\cdot f\left(\frac{1}{x}\right)=const\cdot x^{-\frac{n}{2}-1}\cdot \exp\left(-\frac{1}{2x}\right).
	$$
	Hence, distribution (B) representing the undertaking's perspective has an heavy tail (no exponential decay) in contrast to the distribution (A). 
\end{example}
The example demonstrates that the two perspectives are not equivalent. There is no symmetry or equivalence between the uncertainty about $\hat{\theta}$ from the theoretical perspective and the uncertainty about $\theta$ from the undertaking's perspective. In particular, in this example the parameter risk from the economical relevant perspective from the undertaking would be significantly underestimated using the parameter distribution (A) corresponding to the theoretical perspective.
\begin{rem}
	\begin{enumerate}
\item 	The considerations in Example \ref{ex:intro} can be generalized to a wider class of distribution families (see \cite{froehlich2014}).
\item For the normal distribution used in Example \ref{ex:intro} the fiducial distribution $(B)$ coincides with the Bayesian posterior distribution with non-informative prior (cf. \cite{hora}). This indicates that in Example \ref{ex:intro} the parameter distribution $(B)$ directly corresponds to the undertaking's perspective in the absence of a-priori information.
\item We point out that the distribution of $\bs \sigma_{sim}^2$ is presumed to represent the uncertainty of the undertaking about the true parameter $\sigma^2$ rather than to be a point our interval estimator for $\sigma^2$ (cf. \cite{hannig2016}, p.6). 
\end{enumerate}
\end{rem}
In this contribution we restrict to the undertaking's perspective which is based on the observed claims development triangle $D$. Note that the bootstrapping approach (see e.g. \cite{pinheiro2003,england2006,bjoerkwall2011}) as well as many contributions concerning the mean squared error of estimation (see e.g. \cite{Mack1993,Mack1999,merz2008}) adopt the theoretical perspective. Thus, these approaches are not appropriate to model the impact of parameter uncertainty upon the actual risk of economic losses from the undertaking's perspective.\\ 
Indeed, in Section \ref{subsec:boots} we demonstrate that bootstrapping does not yield a solution to Question \ref{qu:2}.
\section{Appropriateness of the methods for the normal model}\label{sec:exmethods}
\sectionmark{Appropriateness for the normal mode}
Referring to Question \ref{qu:2} in this section we assess several methods for calculating the risk capital for the one-year reserve risk by comparing the probability of solvency attained by the respective method to the required confidence level. To avoid technical complications we concentrate on a rather simple model - the normal model \cite{england2006, gisler2006} based on Mack's chain ladder model (cf. Subsection \ref{subsec:grand}).\\
We assume that the individual chain ladder factors $\bs F_{i,k}=\bs C_{i,k}/C_{i,k-1}$ conditioned on $\{C_{i,0},\ldots,C_{i,k-1}\}$ are normally distributed, i.e. there exists parameters $f_k>0$ and $\sigma_k$ independent of the specific accident year $i$ such that
$$
\bs F_{i,k}|\{C_{i,0},\ldots,C_{i,k-1}\}\sim N\left(f_k;\frac{\sigma_k^2}{C_{i,k-1}^\gamma}\right) 
$$
for $0\le i\le n$ and $1\le k\le n$ and $\gamma=0$ or $\gamma=1$.\\
Thus, we can write $\bs F_{i,k}$ in the form (\ref{eq:res}) with independent, standard normally distributed residues $\bs \zeta_{i,k}$. We denote the set of realized (``true'', but still unknown) residues by 
$\mathcal{R}=\{\zeta_{i,k}:i+k\le n\}$.\\ 
Moreover, we assume that $\bs C_{i,0}$ is normally distributed with mean $f_0$ and variance $\sigma_0^2$ for $0\le i\le n$. Note that the realizations of $C_{i,0}$ have already been observed. Hence, for the modelling of the reserve risk an estimation of the parameters $f_0$ and $\sigma_0^2$ is not necessary and the parameter uncertainty with respect to these parameters is not relevant. \\ 


\subsection{Without modelling parameter risk}\label{subsec:without}
In this section we consider the approach without modelling parameter risk, i.e. we set $\bs f_k^{sim}\equiv \hat{f}_k$ and $\bs \sigma_k^{sim}\equiv \hat{\sigma}_k$ (cf. Subsection \ref{subsec:angemessen}). Note that the common approach used in practice is bootstrapping which will be considered in Subsection \ref{subsec:boots}.\\
We model the cumulative claims for the next business year by
$
\bs C_{i,n-i+1}^{model,without}=C_{i,n-i}\cdot \bs F_{i,n-i+1}^{sim,without}
$
with 
$$
\bs F_{i,n-i+1}^{sim,without}=\hat{f}_{n-i+1}+\frac{\hat{\sigma}_{n-i+1}}{\sqrt{C_{i,n-i}^\gamma}}\cdot \bs \zeta_{i,n-i+1}^\prime
$$
for $\bs \zeta_{i,n-i+1}^\prime$ independent, normally distributed residues.\\
We set $$\bs Z^{model}_{without}(D)=\sum_{i=1}^n \bs Z_{i,n-i+1}^{model,without}\bs =\sum_{i=1}^n \left(\bs C_{i,n-i+1}^{model,without}-C_{i,n-i}\right)$$ and obtain the reserve $\hat{R}_1(D,\overrightarrow{\bs Z}^{model}_{without})=\hat{R}_1(D,(\bs Z_{i,n-i+1}^{model,without}:1\le i\le n))$ in $t=1$. The risk capital $\SCR(\alpha;D;without)$ is defined as the $\alpha$-quantile of 
$$
\bs X^{model}_{without}(D) =\bs Z^{model}_{without}(D) + \hat{R}_1(D,\overrightarrow{\bs Z}^{model}_{without})-\hat{R}_0(D).
$$
The estimates $\hat{f}_k$ and $\hat{\sigma}_k^2$ depend on the realization $D$ of the random claims development triangle $\bs D$.\\
Using the assumptions given in Subsection \ref{subsec:data} and following the general approach described in Subsection \ref{subsec:general} we derive the results for the probability of solvency presented in Table \ref{tab:without}. 
\begin{table}[H]
\begin{center}
\begin{tabular}{|c|c|c|}
\hline
\mbox{$\alpha$} &$\gamma=0$ &$\gamma=1$\\
\hline
\hline
90\% &84.98\%&85.80\%\\
\hline
95\% &91.19\%&91.35\%\\
\hline
99\% &96.97\%&97.07\%\\
\hline
99.5\% &97.97\%&98.09\%\\
\hline
\end{tabular}
\caption{Solvency probabilities $P(\bs X\le \SCR(\alpha;\bs D;without))$ for the approach without the consideration of parameter risk, for different quantiles and for $\gamma=0$ resp. $\gamma=1$}\label{tab:without}
\end{center}
\end{table}
\begin{conclusion}
Neglecting parameter uncertainty leads to a probability of solvency which is significantly lower than the required confidence level.
\end{conclusion}
\subsection{Bootstrapping}\label{subsec:boots}
In this section we consider the popular approach using bootstrapping. There are numerous variants of the bootstrapping approach; we follow Subsection 7.4 in \cite{wuethrich2008}.\\
In the sequel we describe how to determine the modelled risk $\bs X_{model}^{BT}(D)$ using bootstrapping: Again, we set $\hat{f}_k$ und $\hat{\sigma}_k^2$ as in (\ref{eq:estim}).
Given $\hat{f}_k$ und $\hat{\sigma}_k^2$ we estimate the residues by 
$$
\hat{\zeta}_{i,k}=\left(\frac{F_{i,k}-\hat{f}_k}{\hat{\sigma}_k}\right)\cdot \sqrt{C_{i,k-1}^\gamma} \mbox{ for $i+k\le n$, $k<n$}
$$
and consider the set $\hat{\mathcal{R}}:=\{\hat{\zeta}_{i,k}:i+k\le n, k<n\}.$\\
As pointed out in \cite{wuethrich2008}, Section 7.4, Equation 7.23, the variance of the residues $\hat{\bs \zeta}_{\cdot,k}\in\mathcal{R}$ is smaller than $1$. More precisely,
$$
\Var(\hat{\bs \zeta}_{i,k}|C_{0,k-1},\ldots,C_{n-k,k-1})=1-\frac{C_{i,k-1}^\gamma}{\sum_{i=0}^{n-k} C_{i,k-1}^\gamma}<1.
$$
We adjust the residues accordingly (cf. Equation 7.24 in \cite{wuethrich2008}) and obtain the set $\mathcal{R}^*$.\\
We follow the conditional approach in \cite{wuethrich2008}, Section 7.4.2 to construct a bootstrapping distribution of $\bs \theta^{sim}$ by Monte-Carlo simulation.\\ 
A scenario of the bootstrapping distribution $(\bs f_k^{sim,BT}, (\bs \sigma_k^{sim,BT})^2)$ is constructed as follows: We determine the chain ladder factors $\bs f_k^{sim,BT}$ by
$$
\bs f_k^{sim,BT}=\sum_{i=0}^{n-k}\frac{C_{i,k-1}^\gamma}{\sum_{h=0}^{n-k} C_{h,k-1}^\gamma}\cdot \bs F_{i,k}^{*,BT}
$$
with
$$
\bs F_{i,k}^{*,BT}=\hat{f}_k+\frac{\hat{\sigma}_k}{\sqrt{C_{i,k-1}^\gamma}}\cdot \bs \zeta_{i,k}^*
$$
where $\bs \zeta_{i,k}^*$ is chosen randomly from $\mathcal{R}^*$ and set
$$
\left(\bs \sigma_k^{sim,BT}\right)^2=\frac{1}{n-k}\cdot \sum_{i=0}^{n-k} C_{i,k-1}^\gamma\cdot (\bs F_{i,k}^{*,BT}-\bs f_k^{sim,BT})^2.
$$
We then define
$$
\bs F_{1,n}^{sim,BT}\equiv 1\mbox{ and }
\bs F_{i,n-i+1}^{sim,BT}=\bs f_{n-i+1}^{sim,BT}+\frac{\bs \sigma_{n-i+1}^{sim,BT}}{\sqrt{C_{i,n-i}^\gamma}}\cdot \bs \zeta_{i,n-i+1}^\prime \mbox{ for }i=2,\ldots,n
$$
where $\bs \zeta_{i,n-i+1}^\prime\sim N(0;1)$ are i.i.d. random variables independent of the bootstrapped parameters.
This defines
$
\bs C^{model,BT}_{i,n-i+1}=\bs F^{sim,BT}_{i,n-i+1}\cdot C_{i,n-i}
$
and
$$
\bs Z^{model}_{BT}(D)=\sum_{i=1}^n \bs Z_{i,n-i+1}^{model,BT}=\sum_{i=1}^n \left(\bs C_{i,n-i+1}^{mod,BT}-C_{i,n-i}\right).
$$
The risk capital $\SCR(\alpha;D;BT)$ is defined as the $\alpha$-quantile of 
$$
\bs X^{model}_{BT}(D) =\bs Z^{model}_{BT}(D) + \hat{R}_1(D,\overrightarrow{\bs Z}^{model}_{BT})-\hat{R}_0(D)
$$
where $\overrightarrow{\bs Z}^{model}_{BT}=(\bs Z_{i,n-i+1}^{model,BT}:1\le i\le n)$.\\
Using the assumptions given in Subsection \ref{subsec:data} and following the general approach described in Appendix \ref{subsec:general} we derive the results for the probability of solvency presented in Table \ref{tab:bt}. 
\begin{table}[H]
\begin{center}
\begin{tabular}{|c|c|c|}
\hline
\mbox{$\alpha$} &$\gamma=0$ &$\gamma=1$\\
\hline
\hline
90\% &88.57\%&89.05\%\\
\hline
95\% &93.68\%&94.02\%\\
\hline
99\% &98.27\%&98.63\%\\
\hline
99.5\% &99.08\%&99.15\%\\
\hline
\end{tabular}
\caption{Solvency probabilities $P(\bs X\le \SCR(\alpha;\bs D;BT))$ for the bootstrapping approach, for different quantiles and for $\gamma=0$ and $\gamma=1$} \label{tab:bt}
\end{center}
\end{table}
\begin{conclusion}
The bootstrapping approach does not attain the required confidence level.
\end{conclusion}
\subsection{The inversion method for the normal model}\label{subsec:inv}
The ``inversion method'' introduced in \cite{froehlich2014} expresses the parameter estimate $\hat{\theta}$ in terms of the true parameter $\theta$ and to invert this relation to obtain an expression of $\theta$ in terms of $\hat{\theta}$.\\ 
Since we assume that all payments are settled after $n$ development years, we set $\bs f_n^{sim}\equiv 1$ and $\bs \sigma_n^{sim}\equiv 0$ and apply the idea of the inversion method to the parameter vector $\theta=(f_1,\sigma_1^2,f_2,\sigma_2,\ldots, f_{n-1},\sigma_{n-1}^2)$ of the normal model: Inserting
$$
\bs F_{i,k}=f_k+\frac{\sigma_k}{\sqrt{C_{i,k-1}^\gamma}}\cdot \bs \zeta_{i,k}
$$
with $\bs \zeta_{i,k}$ independent, standard normally distributed into (\ref{eq:estim}) yields for $k=1,\ldots,n-1$  
\begin{align*}
\hat{\bs f}_k&=\sum_{i=0}^{n-k} \bs F_{i,k}\cdot \frac{C_{i,k-1}^\gamma}{\sum_{h=0}^{n-k} C_{h,k-1}^\gamma}\\
&=f_k+\sigma_k\cdot \sum_{i=0}^{n-k} \frac{\bs \zeta_{i,k}}{\sqrt{C_{i,k-1}^\gamma}}\cdot \frac{C_{i,k-1}^\gamma}{\sum_{h=0}^{n-k} C_{h,k-1}^\gamma}=f_k+\sigma_k\cdot \bs R_{k} \mbox{ and }\\
\hat{\bs \sigma}_k^2&=\frac{1}{n-k}\sum_{i=0}^{n-k} C_{i,k-1}^\gamma (\bs F_{i,k}-\hat{\bs f}_k)^2\\
&=\frac{1}{n-k}\cdot \sigma_k^2\cdot \sum_{i=0}^{n-k} C_{i,k-1}^\gamma\left(\frac{\bs \zeta_{i,k}}{\sqrt{C_{i,k-1}^\gamma}}-\sum_{j=0}^{n-k}\frac{\bs \zeta_{j,k}}{\sqrt{C_{j,k-1}^\gamma}}\frac{C_{j,k-1}^\gamma}{\sum_{h=0}^{n-k} C_{h,k-1}^\gamma}\right)^2\\
&=\sigma_k^2\cdot \bs M_{k}
\end{align*}
with \label{def:rk}
\begin{equation}
\label{eq:rk}
\bs R_{k}=\sum_{i=0}^{n-k} \bs \zeta_{i,k}\cdot \frac{\sqrt{C_{i,k-1}^\gamma}}{\sum_{h=0}^{n-k} C_{h,k-1}^\gamma}
\end{equation}
and
\begin{eqnarray}\label{eq:tk}
\bs M_{k}&=\frac{1}{n-k}\sum_{i=0}^{n-k} C_{i,k-1}^\gamma\left(\frac{\bs \zeta_{i,k}}{\sqrt{C_{i,k-1}^\gamma}}-\sum_{j=0}^{n-k}\frac{\bs \zeta_{j,k}}{\sqrt{C_{j,k-1}^\gamma}}\frac{C_{j,k-1}^\gamma}{\sum_{h=0}^{n-k-1} C_{h,k-1}^\gamma}\right)^2 \nonumber\\ 
&=\sum_{i=0}^{n-k} C_{i,k-1}^\gamma\left(\frac{\bs \zeta_{i,k}}{\sqrt{C_{i,k-1}^\gamma}}-\bs R_k\right)^2.
\end{eqnarray}
Solving these equations for $f_k$ resp. $\sigma_k^2$ for $1\le k\le n-1$ defines a probability distribution of the unknown parameter vector $(f_k,\sigma_k^2)$ given by
\begin{align}
\bs f_k^{sim}:=\hat{f}_k-\frac{\hat{\sigma}_k}{\sqrt{\bs M_{k}^\prime}}\cdot \bs R_{k}^\prime\mbox{\quad and\quad }
\left(\bs \sigma_k^{sim}\right)^2:=\frac{\hat{\sigma}_k^2}{\bs M_{k}^\prime} \label{glAB}
\end{align}
where the modelled random variables $\bs R_k^\prime$ resp. $\bs M_{k}^\prime$ are independent copies of $\bs R_k$ and $\bs M_{k}$ obtained by replacing $\bs \zeta_{i,k}$ in (\ref{eq:rk}) and (\ref{eq:tk}) by independent $\bs \zeta^\prime_{i,k}\sim N(0;1)$.\\
To model the claims development loss of the next calendar year we consider (\ref{glAB}) for $k=n-i+1$ and $i=2,\ldots,n$:
\begin{align*}
\bs f_{n-i+1}^{sim}&=\hat{f}_{n-i+1}-\bs \sigma_{n-i+1}^{sim} \bs R_{n-i+1}^\prime \mbox{ and } \left(\bs \sigma_{n-i+1}^{sim}\right)^2=\frac{\hat{\sigma}_{n-i+1}^2}{\bs M_{n-i+1}^\prime}.
\end{align*}
Let
\begin{equation}\label{eq:fmodell}
\bs F_{i,n-i+1}^{sim}:=\bs f_{n-i+1}^{sim}+\frac{\bs \sigma_{n-i+1}^{sim}}{\sqrt{C_{i,n-i}^{\gamma}}}\cdot \bs \zeta_{i,n-i+1}^\prime
\end{equation}
and $\bs Z_{i,n-i+1}^{model}:=(\bs F_{n-i+1}^{sim}-1)\cdot C_{i,n-i}$ for $i=1,\ldots,n$. Note that $\bs Z_{i,n-i+1}^{model}$ depends on the observed  triangle $D$.\\
However, modelling the claims development result directly by $\bs X^{model}=\sum \bs Z_{i,n-i+1}^{model} + \hat{R}_1(D,\overrightarrow{\bs Z}^{model})-\hat{R}_0(D)$ with $\overrightarrow{\bs Z}^{model}=(\bs Z^{model}_{i,n-i+1}: 1\le i\le n)$ yields a risk capital model which is too conservative, i.e. setting
$
SCR(\alpha;D):=F_{\bs X^{model}}^{-1}(\alpha)
$
yields 
$
P(\bs X^{model}\le SCR(\alpha;\bs D))>\alpha
$
for e.g. $\alpha=99.5\%$  (see \cite{froehlich2015} for a comprehensive discussion).\\
Therefore, we need to adjust the inversion method to derive an risk capital model leading to a significantly better approximation of the desired probability of solvency.\\
For the adjustment of the inversion method we introduce a stochastic correction factor of the same form as suggested in \cite{froehlich2015}:
\begin{equation}\label{eq:asim}
\bs a_{sim}:=\left(\sum_{i=2}^n \frac{\hat{w}_{n-i+1}}{\bs M_{n-i+1}^\prime}\cdot \sum_{i=2}^n w_{n-i+1}\cdot \bs M_{n-i+1}^\prime \right)^{-\frac{1}{2}}
\end{equation}
with weights $w_{n-i+1}$, $2\le i\le n$, defined by
$$
w_{n-i+1}=\frac{(\sigma_{n-i+1})^2\cdot C_{i,n-i}^2\cdot \Bigl(\frac{1}{C_{i,n-i}^\gamma}+\frac{1}{\sum\limits_{l=0}^{i-1} C_{l,n-i}^\gamma}\Bigr)}{\sum_{j=2}^n (\sigma_{n-j+1})^2\cdot C_{j,n-j}^2\cdot \Bigl( \frac{1}{C_{j,n-j}^\gamma}+\frac{1}{\sum\limits_{l=0}^{j-1} C_{l,n-j}^\gamma}\Bigr)}
$$
and
\begin{equation}\label{eq:what}
\hat{w}_{n-i+1}=\frac{(\hat{\sigma}_{n-i+1})^2\cdot C_{i,n-i}^2\cdot \Bigl(\frac{1}{C_{i,n-i}^\gamma}+\frac{1}{\sum\limits_{l=0}^{i-1} C_{l,n-i}^\gamma}\Bigr)}{\sum_{j=2}^n (\hat{\sigma}_{n-j+1})^2\cdot C_{j,n-j}^2\cdot \Bigl( \frac{1}{C_{j,n-j}^\gamma}+\frac{1}{\sum\limits_{l=0}^{j-1} C_{l,n-j}^\gamma}\Bigr)}.
\end{equation}
Set
$
\hat{Z}_{i,n-i+1}=(\hat{f}_{n-i+1}-1) \cdot C_{i,n-i}
$
and
$$
\bs Z^{model,adj}_{i,n-i+1}=(1-\bs a_{sim})\hat{Z}_{i,n-i+1}+ \bs a_{sim}\cdot \bs Z_{i,n-i+1}^{model}
$$
for $i=1,\ldots,n.$\\
Let $\bs Z^{model}_{adj}:=\sum_{i=1}^n \bs Z_{i,n-i+1}^{model,adj}$ and let $\mathcal{R}=\{\zeta_{i,k}:i+k\le n\}$ be the set of realized residues. To express the dependency of $\bs Z^{model}_{adj}$ on the residues $\mathcal{R}$ we write $\bs Z^{model}_{adj}= \bs Z^{model}_{adj}(\mathcal{R})$. 
The following theorem holds independently of the particular choice of the parameter vector $\theta$.
\begin{theorem}\label{thm:main}
	Let $\SCR_{\bs Z}(\alpha;\mathcal{R};M)$ be the $\alpha$-quantile of $\bs Z^{model}_{adj}=\bs Z^{model}_{adj}(\mathcal{R})$. Then
$$
P\left(\sum_{i=1}^n \bs Z_{i,n-i+1}\le \SCR_{\bs Z}(\alpha;\boldsymbol{\mathcal{R}};M)\right)=\alpha.
$$
\end{theorem}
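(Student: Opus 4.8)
The plan is to exploit the inversion identities $\hat{\bs f}_k=f_k+\sigma_k\bs R_k$ and $\hat{\bs\sigma}_k^2=\sigma_k^2\bs M_k$ derived above, which rewrite the \emph{true} next-year payment in exactly the structural form of the modelled one. First I would isolate the deterministic quantity $\hat Z:=\sum_{i=1}^n(\hat f_{n-i+1}-1)C_{i,n-i}$. Since $\bs Z^{model,adj}_{i,n-i+1}-\hat Z_{i,n-i+1}=\bs a_{sim}\,\bs\sigma^{sim}_{n-i+1}C_{i,n-i}\bigl(\tfrac{\bs\zeta^\prime_{i,n-i+1}}{\sqrt{C_{i,n-i}^\gamma}}-\bs R^\prime_{n-i+1}\bigr)$ carries no dependence on $\hat f$, we have $\SCR_{\bs Z}(\alpha;\mathcal R;M)=\hat Z+q_\alpha$, where $q_\alpha=q_\alpha(\bs{\mathcal R})$ is the conditional $\alpha$-quantile of $\bs Z^{model}_{adj}-\hat Z$ given $\mathcal R$. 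Substituting $\hat{\bs f}_{n-i+1}=f_{n-i+1}+\sigma_{n-i+1}\bs R_{n-i+1}$ into the true payment yields
\[
\sum_{i=1}^n\bs Z_{i,n-i+1}-\hat Z=\sum_{i=2}^n\sigma_{n-i+1}C_{i,n-i}\Bigl(\tfrac{\bs\zeta_{i,n-i+1}}{\sqrt{C_{i,n-i}^\gamma}}-\bs R_{n-i+1}\Bigr)=:\bs H
\]
(the $i=1$ term vanishes as $f_n=1$, $\sigma_n=0$). Thus $\hat Z$ cancels on both sides and the claim reduces to $P(\bs H\le q_\alpha(\bs{\mathcal R}))=\alpha$. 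Note that $\bs H$ has precisely the form of the modelled increment with the simulated quantities replaced by their true counterparts, which is the whole point of the inversion construction.

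Next I would decompose $\bs H=\bs A-\bs B$, where $\bs A=\sum_i\sigma_{n-i+1}C_{i,n-i}^{1-\gamma/2}\bs\zeta_{i,n-i+1}\sim N(0,V_0)$ depends only on the \emph{future} residues and $\bs B=\sum_i\sigma_{n-i+1}C_{i,n-i}\bs R_{n-i+1}\sim N(0,V_1)$ depends only on the \emph{observed} residues. By independence of accident years and of future from observed residues, and by the normal-theory fact that the weighted chain-ladder mean $\bs R_k$ and the weighted residual sum of squares $\bs M_k$ are independent (Cochran's theorem), the three groups $\bs A$, $\{\bs R_{n-i+1}\}$, $\{\bs M_{n-i+1}\}$ are mutually independent. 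Because $\bs Z^{model}_{adj}-\hat Z$ involves the observed triangle only through $\hat\sigma_{n-i+1}^2=\sigma_{n-i+1}^2\bs M_{n-i+1}$, the quantile $q_\alpha$ depends on $\bs{\mathcal R}$ only through $\{\bs M_{n-i+1}\}$, hence $q_\alpha\perp\bs B$. I would then condition on $\{\bs M_{n-i+1}\}$, integrate out the future residues, and integrate out $\bs B$ via the identity $E[\Phi(a+cZ)]=\Phi(a/\sqrt{1+c^2})$. With $V_0+V_1=\sum_i\sigma_{n-i+1}^2C_{i,n-i}^2 v_i=:V_H$, where $v_i:=\tfrac1{C_{i,n-i}^\gamma}+\tfrac1{\sum_{l=0}^{i-1}C_{l,n-i}^\gamma}$, this collapses the first two integrations to $P(\bs H\le q_\alpha)=E_{\{\bs M\}}\bigl[\Phi(q_\alpha/\sqrt{V_H})\bigr]$.

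The heart of the argument, and the step I expect to be the main obstacle, is to show that the correction factor $\bs a_{sim}$ is calibrated exactly so that this last expectation equals $\alpha$. The difficulty is that, conditionally on $\mathcal R$, the variable $\bs Z^{model}_{adj}-\hat Z$ is \emph{not} normal but a scale mixture of normals, because the simulated $\bs M^\prime_{n-i+1}$ enter the denominators, so $q_\alpha$ has no closed form. I would resolve this by computing the variance conditional on $\{\bs M^\prime_{n-i+1}\}$: writing $\hat\Sigma:=\sum_i\hat\sigma_{n-i+1}^2C_{i,n-i}^2 v_i$ and using $\hat\sigma_{n-i+1}^2 C_{i,n-i}^2 v_i=\hat w_{n-i+1}\hat\Sigma$ together with the defining form of $\bs a_{sim}$, a direct calculation shows the $\hat w$-weights cancel and the conditional variance equals $\hat\Sigma/\sum_i w_{n-i+1}\bs M^\prime_{n-i+1}$. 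Hence $\bs Z^{model}_{adj}-\hat Z=\sqrt{\hat\Sigma}\,\bs\xi/\sqrt{\bs Y^\prime}$ with $\bs\xi\sim N(0,1)$ independent of $\bs Y^\prime:=\sum_i w_{n-i+1}\bs M^\prime_{n-i+1}$, so that $q_\alpha=\sqrt{\hat\Sigma}\,\kappa_\alpha$, where $\kappa_\alpha$ is the $\alpha$-quantile of $\bs\xi/\sqrt{\bs Y^\prime}$ and does not depend on the observed residues.

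Finally I would substitute back. Using $\hat\sigma_{n-i+1}^2=\sigma_{n-i+1}^2\bs M_{n-i+1}$ and $\sigma_{n-i+1}^2C_{i,n-i}^2 v_i=w_{n-i+1}V_H$ gives $\hat\Sigma/V_H=\sum_i w_{n-i+1}\bs M_{n-i+1}=:\bs Y$, so
\[
E_{\{\bs M\}}\bigl[\Phi(q_\alpha/\sqrt{V_H})\bigr]=E\bigl[\Phi(\kappa_\alpha\sqrt{\bs Y})\bigr]=P\bigl(\bs\eta/\sqrt{\bs Y}\le\kappa_\alpha\bigr)
\]
for an auxiliary $\bs\eta\sim N(0,1)$ independent of $\bs Y$. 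Since the observed $\bs M_{n-i+1}$ and the simulated $\bs M^\prime_{n-i+1}$ are i.i.d., $\bs Y$ and $\bs Y^\prime$ are identically distributed, whence $\bs\eta/\sqrt{\bs Y}\stackrel{d}{=}\bs\xi/\sqrt{\bs Y^\prime}$ and the right-hand side equals $\alpha$ by the very definition of $\kappa_\alpha$. This closes the proof. The crucial ingredients are the inversion cancellation of $\hat Z$, the Cochran independence $\bs R_k\perp\bs M_k$, and the precise design of $\bs a_{sim}$, which forces the distributional matching $\bs Y\stackrel{d}{=}\bs Y^\prime$.
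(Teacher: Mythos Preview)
Your argument is correct and follows essentially the same route as the paper's proof: both reduce $\bs Z^{model}_{adj}-\hat Z$ to the scale mixture $\sqrt{\hat\Sigma/\bs Y'}\cdot\bs\xi$ via the variance computation with $\bs a_{sim}$, invoke the Cochran-type independence $\bs R_k\perp\bs M_k$, substitute $\hat\sigma_k^2=\sigma_k^2\bs M_k$ to obtain $\hat\Sigma=V_H\bs Y$, and conclude from $\bs Y\stackrel{d}{=}\bs Y'$. The only difference is packaging: the paper shows $G(\boldsymbol{\mathcal R}):=F_{\bs Z^{model}_{adj}(\mathcal R)}(\bs Z)$ is uniform on $[0,1]$ (which gives all $\alpha$ at once), whereas you fix $\alpha$, extract $q_\alpha=\sqrt{\hat\Sigma}\,\kappa_\alpha$, and finish via $E[\Phi(\kappa_\alpha\sqrt{\bs Y})]=P(\bs\eta/\sqrt{\bs Y}\le\kappa_\alpha)=\alpha$; this is the same identity read in the other direction.
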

\begin{proof}
	See Appendix \ref{sec:app}.
\end{proof}
Note that, since $\sigma_{n-i+1}$ is unknown, we can only estimate $\bs a_{sim}$. In the sequel we use the following estimate 
$$
\hat{\bs a}_{sim}:=\left(\sum_{i=2}^{n} \frac{\hat{w}_{n-i+1}}{\bs M_{n-i+1}^\prime}\cdot \sum_{i=2}^{n} \hat{w}_{n-i+1}\cdot \bs M_{n-i+1}^\prime \right)^{-\frac{1}{2}}.
$$
Theorem \ref{thm:main} motivates to set
$$
\hat{\bs Z}_{i,n-i+1}^{model,adj}:=(1-\hat{\bs a}_{sim})\cdot \hat{Z}_{i,n-i+1} + \hat{\bs a}_{sim}\cdot \bs Z_{i,n-i+1}^{model}
$$
and
$
\hat{\bs Z}^{model}_{adj}:=\sum_{i=1}^n \hat{\bs Z}_{i,n-i+1}^{model,adj}.
$\\
Define the modelled risk by
$$
\hat{\bs X}^{model}_{adj}(D)=\hat{\bs Z}^{model}_{adj}(D)+\hat{R}_1(D,(\hat{\bs Z}_{i,n-i+1}^{model,adj}:1\le i\le n))-\hat{R}_0(D)
$$ 
and model the risk capital $\SCR(\alpha;D;model,adj)$ as the $\alpha$-quantile of $\hat{\bs X}^{model}_{adj}(D)$.\\
Again we consider the example given in Subsection \ref{subsec:data} and obtain the following probabilities of solvency:
\begin{table}[H]
\begin{center}
\begin{tabular}{|c|c|c|}
\hline
\mbox{$\alpha$} &$\gamma=0$ &$\gamma=1$\\
\hline
\hline
90\% &89.92\%&89.76\%\\
\hline
95\% &95.06\%&94.89\%\\
\hline
99\% &99.03\%&98.94\%\\
\hline
99.5\% &99.51\%&99.48\%\\
\hline
\end{tabular}
\end{center}
\caption{Solvency probabilities $P(\bs X\le \SCR(\alpha;\bs D;model,adj))$ for the modified inversion method for different quantiles and for $\gamma=0$ and $\gamma=1$} 
\label{tab:inv}
\end{table}
\begin{rem}
	Note that the probability of solvency defined in (\ref{eq:konkret}) depends on the random triangle $\bs D$. In Theorem \ref{thm:main} we only consider the probability of solvency depending on the randomness of the residues $\boldsymbol{\mathcal{R}}$ for fixed weights $C_{i,k}$. Moreover, the theorem focuses on the $\alpha$-quantile of the random variable $\bs Z$ rather than the complete development loss given by $\bs X = \hat{\bs R}_1 + \bs Z - \hat{R}_0$. However, the experimental results in Table \ref{tab:inv} demonstrate that the method also works for both a random $\bs D$ and the complete development loss $\bs X$.
\end{rem}
\begin{conclusion}
The risk capital model based on the adjustment of the inversion method using the stochastic correction factor $\hat{\bs a}_{sim}$ yields a probability of solvency very close to the required confidence levels, i.e. it provides an answer to Question \ref{qu:2} posed in the introduction in good approximation.
\end{conclusion}
\subsection{Effect on the risk capital}
We consider the effect on the risk capital calculation for an explicit example.  Consider the
claims development triangle taken from \cite{merz2008}:

	\begin{table}[H]
		{\scriptsize
		\begin{tabular}{|c|c|c|c|c|c|c|c|c|c|}
			\hline
			&\multicolumn{9}{|c|}{\scriptsize Development year}\\
			\hline
			&\scriptsize 0 &\scriptsize1 &\scriptsize2 &\scriptsize3 &\scriptsize4 &\scriptsize5 &\scriptsize6 &\scriptsize7 &\scriptsize8 \\
			\hline
			\scriptsize0 &\scriptsize2,202,584	&\scriptsize3,210,449	&\scriptsize3,468,122	&\scriptsize3,545,070	&\scriptsize3,621,627	&\scriptsize3,644,636&\scriptsize3,669,012	&\scriptsize3,674,511	&\scriptsize3,678,633\\
			\scriptsize	1 &\scriptsize2,350,650	&\scriptsize3,553,023	&\scriptsize3,783846	&\scriptsize3,840067	&\scriptsize3,865,187	&\scriptsize3,878,744 &\scriptsize3,898,281	&\scriptsize3,902,425 &	\\
			\scriptsize	2 &\scriptsize2,321,885	&\scriptsize3,424,190	&\scriptsize3,700,876	&\scriptsize3,798198	&\scriptsize3,854,755	&\scriptsize3,878,993&\scriptsize3,898,825&&	\\
			\scriptsize3 &\scriptsize2,171,487	&\scriptsize3,165,274	&\scriptsize3,395,841	&\scriptsize3,466453	&\scriptsize3,515,703	&\scriptsize3,548,422&&&	\\		
			\scriptsize4 &\scriptsize2,140,328	&\scriptsize3,157,079	&\scriptsize3,399,262	&\scriptsize3,500,520	&\scriptsize3,585,812&\scriptsize&\scriptsize&\scriptsize&\scriptsize	\\			
			\scriptsize5 &\scriptsize2,290,664	&\scriptsize3,338,197	&\scriptsize3,550,332	&\scriptsize3,641,036	&\scriptsize&\scriptsize&\scriptsize&\scriptsize&\scriptsize	\\			
			\scriptsize6 &\scriptsize2,148,216	&\scriptsize3,219,775	&\scriptsize3,428,335&\scriptsize&\scriptsize&\scriptsize	&\scriptsize&\scriptsize&\scriptsize\\					
			\scriptsize7 &\scriptsize2,143,728	&\scriptsize3,158,581	&\scriptsize&\scriptsize&\scriptsize&\scriptsize&\scriptsize&\scriptsize&\scriptsize	\\					
			\scriptsize8 &\scriptsize2,144,738	&&&&&&&&	\\
			\hline						
		\end{tabular}
		%
	}
	\end{table}

The chain ladder reserve $\hat{R}_0(D)$ for $\gamma=0$ equals to 2,243,574 Euro and for $\gamma=1$ equals to 2,237,826 Euro.\\
The risk capital calculation yields the following results for the modelled risk capital with respect of the $99.5\%$-quantile using the approaches discussed in the previous sections:
\begin{table}[H]
	\begin{center}
		\begin{tabular}{|c|c|c|c|}
			\hline
			&without  &with &with adjusted\\
			 &parameter uncertainty &bootstrapping  &inversion method\\
			\hline
			\hline
			$\gamma=0$ &191,589 &216,115	&227,182\\
			\hline
			$\gamma=1$ &194,916 &216,365	&226,980\\
			\hline	
		\end{tabular}
	\end{center}
\end{table}
\section{Summary and Outlook} 
\label{sec:outlook}
This article deals with the internal modelling of parameter uncertainty for the reserve risk. We pointed out that for the consideration of parameter uncertainty, the undertaking's perspective is the adequate perspective referring to the real risk of economic losses. Therefore, in order to model parameter uncertainty for the reserve risk in the context of Solvency II it is not appropriate to apply methods of classical reserving designed to measure the prediction error from the theoretical perspective.\\
Considering the probability of solvency already introduced in \cite{gerrard,froehlich2014} we assessed several methods to model parameter uncertainty for risk capital calculations considering a very simple model - the normal model. In particular, we demonstrate that the popular bootstrapping approach does not guarantee the required probability of solvency. We then presented an adjustment of the inversion method introduced in \cite{froehlich2014} achieving the required probability of solvency in good approximation.\\
\\
The main message of our article is not to recommend the usage of the normal model (together with the inversion method). Rather we stress the importance of modelling the solvency capital requirement in such a way that it meets the desired confidence level of 99.5\% - even under the consideration of parameter uncertainty. The normal model is just used for illustration.\\
There are still many questions left for future research:
\begin{enumerate}
\item The normal model is very simple and rarely used in practice.  
For other well-established models the question how to guarantee the required probability of solvency is still open.
\item Does there exist a parameter distribution that guarantees the required probability of solvency simultaneously on every aggregation level (i.e. on the level of every single development factor, every single accident year, every line of business as well as on the level of the overall risk) without using any correction factor when proceeding from one aggregation level to another (cf. \cite{froehlich2015})? 
\item Throughout this contribution we assumed that all claims are settled after $n$ years. In particular, we did not address the problem of parameter uncertainty in the context of tail modelling. 
\end{enumerate}
\textbf{Acknowledgments.}
The experimental results presented in Section \ref{sec:exmethods} have been generated using a Java program. We are grateful for the opportunity to run the program on the bwGriD cluster of the Hochschule Esslingen.\\
The work of the second author has been supported by the DVfVW (Deutscher Verein f\"ur Versicherungswissenschaft) by a Modul 1 Forschungsprojekt with the title ``Das Parameterrisiko in Risikokapitalberechnungen für Versicherungsbest\"ande''.

\begin{appendix}
\section{Proof of Theorem 4.3}\label{sec:app}
We prove Theorem \ref{thm:main} stated in Section \ref{subsec:inv}.

\begin{proof} 
Inserting (\ref{eq:what}) and (\ref{glAB}) into the definition of $\bs a_{sim}$ (cf. Equation \ref{eq:asim}) yields
$$
	\bs a_{sim}=\sqrt{
		\frac{\sum\limits_{i=2}^n (\hat{\sigma}_{n-i+1})^2\cdot C_{i,n-i}^2\cdot \left(\frac{1}{C_{i,n-i}^\gamma}+\frac{1}{\sum_{l=0}^{i-1} C_{l,n-i}^\gamma}\right)}
		{\sum\limits_{i=2}^n (\bs \sigma_{n-i+1}^{sim})^2\cdot C_{i,n-i}^2\cdot \left(\frac{1}{C_{i,n-i}^\gamma}+\frac{1}{\sum_{l=0}^{i-1} C_{l,n-i}^\gamma}\right)\sum\limits_{i=2}^n w_{n-i+1}\cdot \bs M_{n-i+1}^\prime}}.
$$
Since we assume that all claims are settled after $n$ development years, $\bs Z_{1,n}=\bs Z_{1,n}^{model}\equiv \hat{Z}_{1,n}=0$ for all $n$.\\ 
We have
\begin{align*}
\bs Z^{model}_{adj}=&(1-\bs a_{sim})\sum\limits_{i=2}^n \hat{Z}_{i,n-i+1}+\bs a_{sim} \sum\limits_{i=2}^n \bs Z_{i,n-i+1}^{model}\\
=&(1-\bs a_{sim})\sum\limits_{i=2}^n (\hat{f}_{n-i+1}-1)\cdot C_{i,n-i}\\
&+\bs a_{sim}\sum\limits_{i=2}^n \left(\hat{f}_{n-i+1}-\bs \sigma_{n-i+1}^{sim}\cdot \bs R_{n-i+1}^\prime+\frac{\bs \sigma_{n-i+1}^{sim}}{\sqrt{C_{i,n-i}^\gamma}}\cdot \bs \zeta_{i,n-i+1}^\prime-1\right)\cdot C_{i,n-i}\\
\sim &\sum\limits_{i=2}^n (\hat{f}_{n-i+1}-1)\cdot C_{i,n-i}+\bs a_{sim}\sqrt{\sum\limits_{i=2}^n C_{i,n-i}^2\left(\frac{1}{C_{i,n-i}^\gamma}+\frac{1}{\sum_{l=0}^{i-1} C_{l,n-i}^\gamma}\right)(\bs \sigma_{n-i+1}^{sim})^2}\cdot\bs\zeta^\prime
\end{align*}
with independent, standard normally distributed random variables $\bs\zeta^\prime$ and $\bs \zeta_{i,n-i+1}^\prime$ independent of $\bs M_{n-i+1}^\prime$.\\
Hence,
\begin{align*}
&\bs Z^{model}_{adj}\sim \sum\limits_{i=2}^n (\hat{f}_{n-i+1}-1)\cdot C_{i,n-i}+
\sqrt{\frac{\sum\limits_{i=2}^n \hat{\sigma}_{n-i+1}^2 C_{i,n-i}^2\left(\frac{1}{C_{i,n-i}^\gamma}+\frac{1}{\sum_{l=0}^{i-1} C_{l,n-i}^\gamma}\right)}{\sum\limits_{i=2}^n w_{n-i+1}\bs M_{n-i+1}^\prime}}\cdot\bs\zeta^\prime\\
&=\sum\limits_{i=2}^n (\hat{f}_{n-i+1}-1)\cdot C_{i,n-i}\\
&+\sqrt{
\frac{\sum\limits_{i=2}^n \hat{\sigma}_{n-i+1}^2 C_{i,n-i}^2\left(\frac{1}{C_{i,n-i}^\gamma}+\frac{1}{\sum_{l=0}^{i-1} {C_{l,n-i}^\gamma}}\right)\sum_i \sigma_{n-i+1}^2 C_{i,n-i}^2\left(\frac{1}{C_{i,n-i}^\gamma}+\frac{1}{\sum_{l=0}^{i-1} C_{l,n-i}^\gamma}\right)}{\sum\limits_{i=2}^n \sigma_{n-i+1}^2 C_{i,n-i}^2 \left(\frac{1}{C_{i,n-i}^\gamma}+\frac{1}{\sum_{l=0}^{i-1} C_{l,n-i}^\gamma}\right)\bs M_{n-i+1}^\prime}}\cdot\bs\zeta^\prime\\
&=\sum\limits_{i=2}^n (f_{n-i+1}-1)\cdot C_{i,n-i}+\sum\limits_{i=2}^n \sigma_{n-i+1}\cdot R_{n-i+1}\cdot C_{i,n-i}\\
&\scriptstyle+\sqrt{\frac{\sum\limits_{i=2}^n \sigma_{n-i+1}^2 C_{i,n-i}^2\left(\frac{1}{C_{i,n-i}^\gamma}+\frac{1}{\sum_{l=0}^{i-1} {C_{l,n-i}^\gamma}}\right) \cdot M_{n-i+1}\cdot\sum\limits_{i=2}^n \sigma_{n-i+1}^2 C_{i,n-i}^2\left(\frac{1}{C_{i,n-i}^\gamma}+\frac{1}{\sum_{l=0}^{i-1} C_{l,n-i}^\gamma}\right)}{\sum\limits_{i=2}^n \sigma_{n-i+1}^2 C_{i,n-i}^2 \left(\frac{1}{C_{i,n-i}^\gamma}+\frac{1}{\sum_{l=0}^{i-1} C_{l,n-i}^\gamma}\right)\bs M_{n-i+1}^\prime}}\cdot\bs\zeta^\prime
\end{align*}
where $R_{n-i+1}$ is a realization of $\bs R_{n-i+1}$ such that $\hat{f}_{n-i+1}=f_{n-i+1}+\sigma_{n-i+1}\cdot R_{n-i+1}$ and $M_{n-i+1}$ with is a realization of the random variable $\bs M_{n-i+1}$ such that $\hat{\sigma}_{n-i+1}^2=\sigma_{n-i+1}^2\cdot M_{n-i+1}$.\\
\\
We deduce that
\begin{align*}
&\bs Z^{model}_{adj}\sim \sum\limits_{i=2}^n (f_{n-i+1}-1)\cdot C_{i,n-i}+\sqrt{\frac{\sum\limits_{i=2}^n \sigma_{n-i+1}^2 C_{i,n-i}^2}{\sum_{l=0}^{i-1} C_{l,n-i}^\gamma}}\cdot\tilde{\zeta}\\
&\scriptstyle+\sqrt{\frac{\sum\limits_{i=2}^n \sigma_{n-i+1}^2 C_{i,n-i}^2\left(\frac{1}{C_{i,n-i}^\gamma}+\frac{1}{\sum_{l=0}^{i-1} {C_{l,n-i}^\gamma}}\right) \cdot M_{n-i+1}\cdot \sum\limits_{i=2}^n \sigma_{n-i+1}^2 C_{i,n-i}^2\left(\frac{1}{C_{i,n-i}^\gamma}+\frac{1}{\sum_{l=0}^{i-1} C_{l,n-i}^\gamma}\right)}{\sum\limits_{i=2}^n \sigma_{n-i+1}^2 C_{i,n-i}^2 \left(\frac{1}{C_{i,n-i}^\gamma}+\frac{1}{\sum_{l=0}^{i-1} C_{l,n-i}^\gamma}\right)\bs M_{n-i+1}^\prime}}\cdot\bs\zeta^\prime
\end{align*}
where $\tilde{\zeta}:=\sum_i \sigma_{n-i+1}\cdot R_{n-i+1}\cdot C_{i,n-i}\cdot \left(\sqrt{\frac{\sum_i \sigma_{n-i+1}^2 C_{i,n-i}^2}{\sum_l C_{l,n-i}^\gamma}}\right)^{-1}$ is a realization of a standard normally distributed random variable $\tilde{\bs \zeta}$ independent of both, $\bs \zeta^\prime$ and $\bs M_i$ for all $i$.\\
\\
Set $G(\mathcal{R}):=F_{\bs Z^{model}_{adj}(\mathcal{R})} (\bs Z)$ where 
$\bs Z=\sum_{i=1}^n \bs Z_{i,n-i+1}$ and consider the random variable $G(\mathcal{\bs R})$. With an independent, standard normally distributed random variable $\bs \zeta$ and using some algebraic manipulation exploiting properties of the normal distribution we derive
\begin{align*}
&G(\mathcal{\bs R})=F_{\bs Z^{model}_{adj}}\left(\sum\limits_{i=2}^n (f_{n-i+1}-1)C_{i,n-i}+\sigma_{n-i+1}^2 \sqrt{C_{i,n-i}^{2-\gamma}}\cdot\bs \zeta_{i,n-i+1}\right)\\
&\scriptsize\sim F_{\sqrt{\frac{\sum\limits_{i=2}^n \sigma_{n-i+1}^2 C_{i,n-i}^2\left(\frac{1}{C_{i,n-i}^\gamma}+\frac{1}{\sum_{l=0}^{i-1} {C_{l,n-i}^\gamma}}\right)}{\sum\limits_{i=2}^n \sigma_{n-i+1}^2 C_{i,n-i}^2\left(\frac{1}{C_{i,n-i}^\gamma}+\frac{1}{\sum_{l=0}^{i-1} {C_{l,n-i}^\gamma}}\right)\bs M_{n-i+1}^\prime}
	}\cdot \bs \zeta^\prime}\left(\frac{-\sqrt{\frac{\sigma_{n-i+1}^2C_{i,n-i}^2}{\sum_{l=0}^{i-1} C_{l,n-i}^\gamma}}\cdot \tilde{\bs \zeta}+\sqrt{\sum\limits_{i=2}^n \sigma_{n-i+1}^2 C_{i,n-i}^{2-\gamma}}\cdot \bs \zeta}{\sqrt{\sum\limits_{i=2}^n \sigma_{n-i+1}^2 C_{i,n-i}^2\left(\frac{1}{C_{i,n-i}^\gamma}+\frac{1}{\sum_{l=0}^{i-1} C_{l,n-i}^\gamma}\right)\bs M_{n-i+1}}}\right)\\
&\scriptsize\sim F_{\sqrt{\frac{\sum\limits_{i=2}^n \sigma_{n-i+1}^2 C_{i,n-i}^2\left(\frac{1}{C_{i,n-i}^\gamma}+\frac{1}{\sum_{l=0}^{i-1} {C_{l,n-i}^\gamma}}\right)}{\sum\limits_{i=2}^n \sigma_{n-i+1}^2 C_{i,n-i}^2\left(\frac{1}{C_{i,n-i}^\gamma}+\frac{1}{\sum_{l=0}^{i-1} {C_{l,n-i}^\gamma}}\right)\bs M_{n-i+1}^\prime}}\cdot \bs \zeta^\prime}\left(\sqrt{\frac{\sum\limits_{i=2}^n \sigma_{n-i+1}^2C_{i,n-i}^2\left(\frac{1}{C_{i,n-i}^\gamma}+\frac{1}{\sum_{l=0}^{i-1} {C_{l,n-i}^\gamma}}\right)}{\sum\limits_{i=2}^n \sigma_{n-i+1}^2 C_{i,n-i}^2\left(\frac{1}{C_{i,n-i}^\gamma}+\frac{1}{\sum_{l=0}^{i-1} C_{l,n-i}^\gamma}\right)\bs M_{n-i+1}}}\cdot\bs \zeta \right)\\
&\scriptsize\sim F_{\frac{\bs \zeta^\prime}{\left(\sum\limits_{i=2}^n \sigma_{n-i+1}^2 C_{i,n-i}^2\left(\frac{1}{C_{i,n-i}^\gamma}+\frac{1}{\sum_{l=0}^{i-1} {C_{l,n-i}^\gamma}}\right)\bs M_{n-i+1}^\prime\right)^{\frac{1}{2}}}}
\left(\frac{\bs \zeta}{\left(\sum\limits_{i=2}^n \sigma_{n-i+1}^2 C_{i,n-i}^2\left(\frac{1}{C_{i,n-i}^\gamma}+\frac{1}{\sum_{l=0}^{i-1} C_{l,n-i}^\gamma}\right)\bs M_{n-i+1}
\right)^{\frac{1}{2}}}\right).
\end{align*}
Hence, $G(\mathcal{\bs R})$ is uniformly distributed. The assertion of the theorem follows from
$$
P\left(\sum_{i=1}^n \bs Z_{i,n-i+1}\le \SCR_{\bs Z}(\alpha;\boldsymbol{\mathcal{R}};M)\right)=P\left(G(\mathcal{\bs R})\le \alpha \right)=\alpha.
$$
\end{proof}
\end{appendix}

\section{General procedure in Section 4}\label{sec:app2}
\subsection{Example}\label{subsec:data}
Throughout Section \ref{sec:exmethods} we use the following example: Consider a claims development triangle with $n=10$. We assume that the starting values $C_{i,0}$ are normally distributed with mean $f_0=1,420,000$ EUR and standard deviation $\sigma_0=336,000$ EUR.\\ 
The ``true'' parameters $f_k$ and $\sigma_k$ are given by
\begin{table}[H]
	\begin{center}
		\begin{tabular}{|c|c|c|c|c|c|c|c|c|c|c|c|}
			\hline
			&\multicolumn{10}{|c|} {Development year $k$}\\
			\hline
			\hline
			&\small 1 &\small 2 &\small 3 &\small 4 &\small 5&\small 6 &\small 7 &\small 8 &\small 9 &\small 10\\
			\hline
			$f_k$ &\small	1.5&\small	1.2&\small 1.12&\small 1.07&\small 1.04&\small 1.02&\small 1.01&\small 1.005&\small 1.002&\small 1.0\\
			\hline
			$\sigma_k\cdot \sqrt{f_0}^{-\gamma}$ &\small 0.2&\small 0.12&\small 0.08&\small 0.045&\small 0.03&\small 0.018&\small 0.01&\small 0.006&\small 0.003&\small 0.0\\
			\hline
		\end{tabular}
	\end{center}
	\caption{Development factors and their standard deviation}\label{tab:zwei}
\end{table}
Note that we assume all payments to be settled after $10$ development years, i.e. $f_{10}=1$ and $\sigma_{10}=0$.
\subsection{The general procedure to determine the probability of solvency}\label{subsec:general}
To determine the probability of solvency  
\begin{equation}\label{eq:prob}
P(\bs X\le \SCR(99.5\%;\bs D;M))
\end{equation}
experimentally we use the following general procedure based on a Monte-Carlo simulation. Fixing the ``true'' parameters $f_k$, $\sigma_k$ we run through the following steps:
\begin{enumerate}
	\item (Outer loop over $s$ different random triangles) Using the normal model assumptions and given the parameters $f_k$ and $\sigma_k$, $0\le k\le n$, draw $s$ different random development triangles $D_j$, $1\le j\le s$.
	\item For each triangle $D_j$, $1\le j\le s$, do the following:
	\begin{enumerate}
		\item (Simulation of the ``true'' claims development result $\bs X$) Using the normal model assumptions and the parameters $f_k$ and $\sigma_k$ draw random realizations of $\bs Z_{i,n-i+1}$, $1\le i\le n$, representing the payments of the next business year. Estimating $\hat{R}_0(D_j)$ and $\hat{R}_1(D_j;\overrightarrow{\bs Z})$ as described in Subsection \ref{subsec:grand} using the deterministic chain-ladder method we get a realization $x_j$ of 
		$$
		\bs X=\bs X(D_j, \overrightarrow{\bs Z})=\hat{R}_1(D_j,\overrightarrow{\bs Z})+\bs Z-\hat{R}_0(D_j)
		$$ 
		representing the claims development loss of the next business year.
		\item (Determination of the risk capital) Independently of $x_j$ we then determine the SCR using a Monte-Carlo simulation with $t$ scenarios. For each of the $t$ scenarios we draw a realization from $\bs X^{model}(D_j)$ where 
		\begin{align*}
		\bs X^{model}(D_j):=\left\{\begin{array}{ll}
		\bs X^{model}_{without}(D_j) &\mbox{ in Subsection \ref{subsec:without},}\\
		\bs X^{model}_{BT}(D_j) &\mbox{ in Subsection \ref{subsec:boots},}\\
		\hat{\bs X}^{model}_{adj}(D_j) &\mbox{ in Subsection \ref{subsec:inv}.}\\
		\end{array}
		\right.
		\end{align*}
		We set the solvency capital requirement $\SCR=\SCR(D_j)$ equal to the empirical $\alpha$-quantile determined by the simulation described above. It approximates the quantile $F^{-1}_{\bs X^{model}}(\alpha).$
		\item (Does the risk capital cover the loss?) We compare the realization $x_j$ with $\SCR=\SCR(D_j)$.
	\end{enumerate}
	\item (Determination of the probability of solvency) Count how many times we observe $x_j\le \SCR(D_j)$. The relative frequency approximates the probability (\ref{eq:prob}). 
\end{enumerate}
For the calculations of the results presented in Section \ref{sec:exmethods} we used $s=100,000$ and $t=10,000$ simulations.
\begin{rem}
	For the normal model it is theoretically possible that the chain ladder development factors become negative resulting in negative cumulative claims. In the rare cases where we observed negative factors we reset the factor equal to 1.0. Note that small factors correspond to small realizations of $\bs S$ which do not effect the probability of solvency focusing on large realizations.
\end{rem}

\medskip
\small \textbf{Andreas Fr\"ohlich}, Zentrales Aktuariat Komposit, R+V Allgemeine Versicherung AG, Raiffeisenplatz 1,
65189 Wiesbaden, Germany, Email: \url{andreas.froehlich@ruv.de}\\
\\
\textbf{Annegret Weng (corresponding author)}, Hochschule f\"ur Technik, 
Schellingstr. 24. 70174 Stuttgart, Germany,
Tel.: 0049-(0)711- 8926-2730, Fax: 0049-(0)711-8926-2553,
Email: \url{annegret.weng@hft-stuttgart.de}

\end{document}